\title{The interest rate for saving as a possibilistic risk}
\author{Irina Georgescu \\ \footnotesize Bucharest University of Economics\\ \footnotesize Department of Economic Informatics and Cybernetics\\ \footnotesize Pia$\c{t}$a Romana No 6  R 70167, Oficiul Postal 22, Bucharest, Romania\\
 \footnotesize Email: irina.georgescu@csie.ase.ro \\ and \\ Jani Kinnunen \\  \footnotesize
$\AA$bo Akademi University, Tuomiokirkkotori 3,
Turku, 20500, Finland\\
\footnotesize Email: jani.kinnunen@abo.fi}
\date{}
\begin{document}
\maketitle

\begin{abstract}
In the paper there is studied an optimal saving model in which the interest-rate risk for saving is a fuzzy number. The total utility of consumption  is defined by using a concept of possibilistic expected utility. A notion of possibilistic precautionary saving is introduced as a measure of the variation of optimal saving level when moving from a sure saving model to a possibilistic risk model. A first result establishes a necessary and sufficient condition that the presence of a possibilistic interest-rate risk generates an extra-saving. This result can be seen as a possibilistic version of a Rothschilld and Stiglitz theorem on a probabilistic model of saving. A second result of the paper studies the variation of the optimal saving level when moving from a probabilistic model (the interest-rate risk is a random variable) to a possibilistic model (the interest-rate risk is a fuzzy number).

\end{abstract}

\textbf{Keywords}: possibilistic models of saving, possibilistic expected theory, interest rate risk for saving, possibilistic precautionary saving

\newtheorem{definitie}{Definition}[section]
\newtheorem{propozitie}[definitie]{Proposition}
\newtheorem{remarca}[definitie]{Remark}
\newtheorem{exemplu}[definitie]{Example}
\newtheorem{intrebare}[definitie]{Open question}
\newtheorem{lema}[definitie]{Lemma}
\newtheorem{teorema}[definitie]{Theorem}
\newtheorem{corolar}[definitie]{Corollary}

\newenvironment{proof}{\noindent\textbf{Proof.}}{\hfill\rule{2mm}{2mm}\vspace*{5mm}}



%

\section{Introduction}

The study of risk effects on saving decisions started with the papers of Leland, \cite{leland}, Sandmo \cite{sandmo} and Dr$\grave{e}$ze and Modigliani \cite{dreze}. They investigated the way the presence of a labor-income risk influences the increase or the decrease in the optimal saving level. At the same time, Sandmo \cite{sandmo} and Rothschild and Stiglitz \cite{rothschild} studied into the changes caused by an interest-rate risk on the optimal saving.

The notion of precautionary saving was introduced as a measure of the variation of optimal saving when moving from a certain optimal saving model to a saving model with risk. The positivity of precautionary saving indicates the fact that the consumer increases the optimal saving level in the presence of a risk. In case of a saving model with labor-income risk, the precautionary saving is positive if and only if the third derivative of the utility function is positive. By Kimball \cite{kimball}, \cite{kimball1} this last condition tells us that the consumer is prudent. If an interest-rate risk for saving is present in the model, then by a Rothschild and Stiglitz theorem from \cite{rothschild}, the precautionary saving is positive if and only if the relative prudence index is higher than $2$. An new proof of this theorem was given by Magnani in \cite{magnani2}.

Further on, the theme of the effects of the two types of risk (labor-income risk and interest-rate risk) on precautionary saving has been developed by Eeckhoudt and Schlesinger \cite{eeckhoudt1}, Li \cite{li}, Baiardi et al. \cite{baiardi}, etc. An intuitive exposure of the theme described above can be found in section $6$ of the monograph \cite{eeckhoudt}.

On the other hand, Zadeh's possibility theory \cite{zadeh} offers another way to model risk (by \cite{dubois}, \cite{carlsson2}, \cite{georgescu1}). Thereby risk can be thought of as a possibility distribution, particularly, a fuzzy number. Possibilistic approaches to themes of risk theory can be found in monographs \cite{carlsson2}, \cite{georgescu1}, as well as in a series of papers \cite{casademunt}, \cite{georgescu2}, \cite{georgescu3}, \cite{georgescu4} etc. In particular, the paper \cite{casademunt} develops an optimal saving model in which the labor-income risk is a fuzzy number.

As a continuation of paper \cite{casademunt}, in this paper we deal with an optimal saving model in which the interest-rate risk for saving is a fuzzy number. Following a parallel line with \cite{rothschild}, \cite{eeckhoudt} or \cite{eeckhoudt1}, the construction of our model is situated inside the possibilistic EU-theory from \cite{georgescu3}.

The total utility function of consumption is defined using the possibilistic expected utility in \cite{georgescu3}. We define a new notion of possibilistic precautionary saving and we characterize its positivity in terms of relative prudence index. This result can be regarded as a possibilistic variant of Rothschild and Stiglitz theorem from \cite{rothschild}, \cite{magnani1}.

We briefly present the content of the paper. In Section $2$ there are recalled from \cite{carlsson1}, \cite{carlsson2}, \cite{georgescu1}, \cite{georgescu2} the definitions of possibilistic expected utility, the main indicators associated with a fuzzy number (the possibilistic expected value, the possibilistic variance), as well as basic properties of these indicators. Section $3$ contains the construction of the optimal saving model in which the rate of interest for saving is a fuzzy number. In Section $4$ a notion of possibilistic precautionary saving is defined and a Rothschild and Stiglitz theorem regarding its positivity is proved.

\section{Possibilistic EU-theory}

Risk appears in several economic and financial problems. The study of different topics of risk theory was done in the framework of the classical von Neumann-Morgenstern theory (=EU-theory).  The central notion of $EU$-theory is the expected utility: if $X$ is a random variable and $u: {\bf{R}} \rightarrow {\bf{R}}$ is a continuous function, then the mean value $M(u(X))$ \footnote{If $X$ is a random variable (with respect to the probability space $(\Omega, \mathcal{K}, P)$) then we denote by $M(X)$ its mean value and by $D^2(X)$ its variance.} of the random variable $u(X)$ is the expected utility associated with $X$ and $u$. For some particular forms of $u$ the main indicators associated with $X$ are obtained: the expected value, the variance, the moments, etc. Accordingly,  von Neumann-Morgenstern EU-theory has two fundamental entities: the random variable $X$ representing the risk and the utility function $u$ representing the decision-maker (the agent).

The possibilistic $EU$-theory from \cite{georgescu1}, \cite{georgescu3} starts from three basic entities:

$\bullet$ a weighting function $f: [0, 1] \rightarrow {\bf{R}}$ ($f$ is a non-negative and increasing function that satisfies the condition $\int_0^1 f(\gamma) d\gamma=1$).

$\bullet$ a fuzzy number $A$ whose level sets have the form $[A]^\gamma=[a_1(\gamma), a_2(\gamma)]$ for any $\gamma \in [0, 1]$

$\bullet$ a utility function $u: {\bf{R}} \rightarrow {\bf{R}}$ of class $\mathcal{C}^2$.

In case of a possibilistic $EU$-theory risk is represented by a fuzzy number $A$ and the utility function describes the agent.

We fix a weighting function $f$, a fuzzy number $A$ and a utility function $u$.

\begin{definitie}
\cite{georgescu1}, \cite{georgescu3} The possibilistic expected utility $E_f(u(A))$ associated with the triple $(f, A, u)$ is defined by:

$E_f(u(A))=\frac{1}{2}\int_0^1 [u(a_1(\gamma))+u(a_2(\gamma))]f(\gamma)d\gamma$ (2.1)

Setting in (2.1) $u=1_{\bf{R}}$ (the identity function of $\bf{R}$) we obtain the possibilistic expected value (\cite{carlsson1}, \cite{carlsson2}, \cite{georgescu1}):

$E_f(A)=\frac{1}{2}\int_0^1 [a_1(\gamma)+a_2(\gamma)]f(\gamma) d\gamma$  (2.2)

If we take in (2.1) $u(x)=(x-E_f(A))^2$ then one obtains the possibilistic variance \cite{carlsson1}, \cite{carlsson2}, \cite{georgescu1}:

$Var_f(A)=\frac{1}{2}\int_0^1 [(a_1(\gamma)-E_f(A))^2+(a_2(\gamma)-E_f(A))^2]f(\gamma)d\gamma$ (2.3)
\end{definitie}

\begin{propozitie}
\cite{georgescu1}, \cite{georgescu3} Let $a, b \in \bf{R}$ and $g: \bf{R} \rightarrow \bf{R}$, $h: \bf{R} \rightarrow \bf{R}$ two utility functions. We denote
$u=ag+bh$. Then for any fuzzy number $A$ we have

$E_f(u(A))=aE_f(g(A))+bE_f(h(A))$  (2.4)
\end{propozitie}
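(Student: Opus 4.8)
The plan is to prove the identity (2.4) by a direct computation from Definition 2.1, relying only on the additivity and homogeneity of the integral appearing in (2.1). First I would fix the fuzzy number $A$ and write $[A]^\gamma = [a_1(\gamma), a_2(\gamma)]$ for its $\gamma$-level sets. Since $u = ag + bh$ as functions on $\mathbf{R}$, evaluating at the two endpoints of the level set gives, for every $\gamma \in [0,1]$, the pointwise identities $u(a_1(\gamma)) = a\,g(a_1(\gamma)) + b\,h(a_1(\gamma))$ and $u(a_2(\gamma)) = a\,g(a_2(\gamma)) + b\,h(a_2(\gamma))$.

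Next I would substitute these two expressions into the right-hand side of (2.1), obtaining
$$E_f(u(A)) = \frac{1}{2}\int_0^1 \Big[ a\big(g(a_1(\gamma)) + g(a_2(\gamma))\big) + b\big(h(a_1(\gamma)) + h(a_2(\gamma))\big) \Big] f(\gamma)\, d\gamma.$$
Then, using linearity of the integral — splitting the integrand into its two summands and pulling the constants $a$ and $b$ outside — the right-hand side becomes
$$a \cdot \frac{1}{2}\int_0^1 \big[g(a_1(\gamma)) + g(a_2(\gamma))\big] f(\gamma)\, d\gamma \; + \; b \cdot \frac{1}{2}\int_0^1 \big[h(a_1(\gamma)) + h(a_2(\gamma))\big] f(\gamma)\, d\gamma.$$
By (2.1) applied to $g$ and to $h$ respectively, the first integral equals $E_f(g(A))$ and the second equals $E_f(h(A))$, which is exactly (2.4).

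There is essentially no hard step here; the one point worth a brief word of justification is that all three integrals are well defined, so that the manipulation above is legitimate. This holds because $a_1$ is increasing and $a_2$ is decreasing (hence measurable and bounded on $[0,1]$), $g$ and $h$ are continuous, and $f$ is non-negative with $\int_0^1 f(\gamma)\, d\gamma = 1$; consequently the integrands $\gamma \mapsto [g(a_1(\gamma)) + g(a_2(\gamma))]f(\gamma)$ and $\gamma \mapsto [h(a_1(\gamma)) + h(a_2(\gamma))]f(\gamma)$ are integrable on $[0,1]$, and the splitting of the integral is valid. The entire argument is thus just linearity of integration transported through the definition of $E_f$.
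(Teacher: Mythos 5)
Your proof is correct: the identity follows exactly as you argue, by substituting $u = ag+bh$ into Definition 2.1 and using linearity of the integral, and your remark on integrability of the integrands is the only justification needed. The paper itself gives no proof (it cites \cite{georgescu1}, \cite{georgescu3} for this proposition), and your direct computation is precisely the standard argument those sources use.
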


\begin{propozitie}
(\cite{georgescu1} the proof of Proposition 4.4.2, \cite{georgescu3}) For any fuzzy number $A$ and any utility function $u$ the following approximation formula holds:

$E_f(u(A)) \approx u(E_f(A))+\frac{1}{2}u''(E_f(A))Var_f(A)$  (2.5)
\end{propozitie}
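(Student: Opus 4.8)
The plan is to derive the formula from a second-order Taylor expansion of $u$ about the point $E_f(A)$, followed by the linearity property (2.4) and the definitions (2.2)--(2.3). Write $m=E_f(A)$ for brevity. Since $u$ is of class $\mathcal{C}^2$, Taylor's formula gives, for $x$ near $m$,
\[
u(x)\approx u(m)+u'(m)(x-m)+\tfrac{1}{2}u''(m)(x-m)^2 .
\]
Viewed as a function of $x$, the right-hand side is a linear combination \emph{with constant coefficients} (because $m$ is a fixed real number) of the three utility functions $x\mapsto 1$, $x\mapsto x-m$ and $x\mapsto (x-m)^2$.

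Next I would evaluate this approximate identity at the fuzzy number $A$, apply $E_f$, and push it through the linear combination by means of Proposition (2.4) (used twice, to handle the three terms), obtaining
\[
E_f(u(A))\approx u(m)\,E_f(1)+u'(m)\,E_f(A-m)+\tfrac{1}{2}u''(m)\,E_f\big((A-m)^2\big).
\]
I then compute the three possibilistic expected utilities on the right. For the constant function, (2.1) gives $E_f(1)=\tfrac{1}{2}\int_0^1(1+1)f(\gamma)\,d\gamma=\int_0^1 f(\gamma)\,d\gamma=1$ by the normalization of the weighting function. For the linear term, (2.4) with $g(x)=x$ and $h(x)=1$ together with (2.2) gives $E_f(A-m)=E_f(A)-m\,E_f(1)=m-m=0$. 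For the quadratic term, the function $x\mapsto(x-m)^2=(x-E_f(A))^2$ is exactly the one appearing in (2.3), so $E_f\big((A-m)^2\big)=Var_f(A)$. Substituting these values back yields $E_f(u(A))\approx u(m)+\tfrac{1}{2}u''(m)\,Var_f(A)$, which is (2.5).

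The only delicate point is the justification of the symbol ``$\approx$'': the argument discards the remainder of the Taylor expansion and then exchanges this approximation with the integrals defining $E_f$. A fully rigorous treatment would require $A$ to have small spread (the level sets $[a_1(\gamma),a_2(\gamma)]$ close to $m$) together with a uniform bound on the Taylor remainder over these level sets; since the statement is deliberately an approximation formula, I would handle this step informally, as is done in \cite{georgescu1}, \cite{georgescu3}. The remaining computations are the elementary ones described above.
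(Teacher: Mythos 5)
Your derivation is correct and it is essentially the argument behind the cited result: the paper itself gives no proof of (2.5), referring instead to \cite{georgescu1} (Proposition 4.4.2) and \cite{georgescu3}, where the formula is obtained exactly as you do, by a second-order Taylor expansion of $u$ about $E_f(A)$, linearity of $E_f$ applied to the Taylor polynomial, and the evaluations $E_f(1)=1$, $E_f(A-E_f(A))=0$, $E_f\big((A-E_f(A))^2\big)=Var_f(A)$. Your closing caveat about the informal status of ``$\approx$'' (discarding the Taylor remainder) matches how the cited sources treat it, so nothing essential is missing.
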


\section{Saving in the presence of an interest-rate risk}

The construction of the possibilistic model for interest rate on saving from this section will have as a starting point the two optimal saving models from
\cite{eeckhoudt}, Section 6.2, the certain model and the probabilistic model. To fix ideas we will briefly present these two models:

{\emph{The certain model}} This is a two-period consumption model build on the following entities:

$\bullet$ the consumer has the same utility function $u(x)$ both in the first period (denoted $0$) and in the second period (denoted $1$);

$\bullet$ $y_0$ is the sure level of first-period income;

$\bullet$ $r$ is the interest rate for saving;

$\bullet$ $R=1+r>0$ is the return on saving;

$\bullet$ $s>0$ is the level of saving.

As usual, one assumes that the utility function $u$ is of class $\mathcal{C}^2$ and $u'>0$, $u''<0$.

The total utility function of consumption will be:

$U(s)=u(y_0-s)+u(s(1+r))=u(y_0-s)+u(sR)$  (3.1)

The function $U(s)$ is strictly concave. The consumer wants to determine that value $s^\ast$ of $s$ maximizing the total utility $U(s)$:

$\displaystyle \max_s U(s)=U(s^\ast)$  (3.2)

The optimal saving $s^\ast$ is a solution of the first-order condition

$-u'(y_0-s)+Ru'(sR)=0$ (3.3)

Since $U'(s)$ is strictly decreasing, the solution $s^\ast$ of (3.3) is unique (when it exists.)

{\emph{The probabilistic model}} In this model, instead of an interest rate for saving $r$ we have a random variable $\tilde r$ with $M(\tilde r)=r$. \footnote{The other entities of the probabilistic model are the same as in the certain model.} The return of the saving will be the random variable $\tilde R=1+ \tilde r >0$. One notices that $M(\tilde R)=R$. The total utility function of the model will be:

$V(s)=u(y_0-s)+M(u(s \tilde R))$ (3.4)

$V(s)$ is also strictly concave. We denote by $s^\ast_1$ the solution of the maximization problem $\displaystyle \max_s U(s)$.

The first order condition associated with the maximization problem will be:

$-u'(y_0-s)+M(\tilde R u'(s\tilde R))=0$ (3.5)

The (probabilistic) precautionary saving $s^\ast_1-s^\ast$ describes the variation of optimal saving when moving from the certain model to the probabilistic model.

\begin{propozitie}
(Rothschild and Stiglitz, \cite{rothschild}) If $u$ is of class $\mathcal{C}^2$ and $u'>0$, $u''<0$, then the following equivalences hold:

(i) $s^\ast_1-s^\ast \geq 0$;

(ii) $-\frac{u'''(Rs^\ast)}{u''(R s^\ast)} R s^\ast \geq 2$.
\end{propozitie}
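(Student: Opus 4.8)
The plan is to reduce the sign of the precautionary saving $s^\ast_1-s^\ast$ to the sign of $V'(s^\ast)$, and then to rewrite $V'(s^\ast)$ as a ``Jensen gap'' of an auxiliary function whose convexity is exactly condition (ii). For the first step, recall that $V$ is strictly concave, so $V'$ is strictly decreasing and $s^\ast_1$ is its unique zero; thus $V'$ is positive to the left of $s^\ast_1$ and negative to the right, and consequently $s^\ast_1\geq s^\ast$ holds if and only if $V'(s^\ast)\geq 0$. Everything therefore reduces to determining the sign of $V'(s^\ast)$.

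For the second step, I would start from (3.5), namely $V'(s^\ast)=-u'(y_0-s^\ast)+M(\tilde R\,u'(s^\ast\tilde R))$, and insert the first-order condition (3.3) for the certain model, $u'(y_0-s^\ast)=R\,u'(Rs^\ast)$. Introducing $\phi(t):=t\,u'(s^\ast t)$, and using $M(\tilde R)=R$, one has $R\,u'(Rs^\ast)=\phi(R)=\phi(M(\tilde R))$ and $M(\tilde R\,u'(s^\ast\tilde R))=M(\phi(\tilde R))$, so that $V'(s^\ast)=M(\phi(\tilde R))-\phi(M(\tilde R))$.

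For the third step, by Jensen's inequality — or, at the level of a second-order approximation, via the probabilistic analogue $M(\phi(\tilde R))\approx \phi(M(\tilde R))+\tfrac12\phi''(M(\tilde R))D^2(\tilde R)$ of formula (2.5), together with $D^2(\tilde R)>0$ — the sign of $V'(s^\ast)$ is governed by the sign of $\phi''(R)$. It then remains to compute: differentiating $\phi(t)=t\,u'(s^\ast t)$ twice gives $\phi''(t)=2s^\ast u''(s^\ast t)+(s^\ast)^2 t\,u'''(s^\ast t)$, hence $\phi''(R)=s^\ast\bigl(2u''(Rs^\ast)+Rs^\ast\,u'''(Rs^\ast)\bigr)$. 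Since $s^\ast>0$ and $u''(Rs^\ast)<0$, dividing the bracket by $u''(Rs^\ast)$ and reversing the inequality yields $\phi''(R)\geq 0\iff -\frac{u'''(Rs^\ast)}{u''(Rs^\ast)}Rs^\ast\geq 2$, which is exactly (ii). Chaining the equivalences of the three steps gives (i)$\iff$(ii).

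The main obstacle is obtaining a genuine equivalence rather than just the implication (ii)$\Rightarrow$(i): Jensen's inequality delivers only one direction for a fixed risk $\tilde R$. The clean way around this is either to read (i) as holding for every admissible interest-rate risk $\tilde R$ with $M(\tilde r)=r$ (so one may localize around $Rs^\ast$ and recover the sign of $\phi''(Rs^\ast)$), or to argue throughout at the level of the quadratic approximation above, where the sign of $V'(s^\ast)$ is exactly that of $\phi''(R)$. I would make this convention explicit at the start of the proof; the computations in the second and third steps are then routine.
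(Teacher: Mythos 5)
Your argument is correct, and it is in fact the exact probabilistic mirror of the argument this paper uses: the paper itself gives no proof of Proposition 3.1 (it is quoted from Rothschild--Stiglitz, with a newer proof attributed to Magnani), but its proof of the possibilistic analogue, Proposition 4.1, follows precisely your three steps --- strict concavity of the total utility to reduce $s^{\ast\ast}\geq s^\ast$ to the sign of the marginal total utility at $s^\ast$, substitution of the certain-model first-order condition, and a second-order expansion of the expected marginal utility of the auxiliary function $h(x)=x\,u'(s^\ast x)$ about the mean, with $h''(R)=s^\ast\bigl(2u''(Rs^\ast)+Rs^\ast u'''(Rs^\ast)\bigr)$ deciding the sign. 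Your closing caveat is the right one to flag: Jensen only gives (ii)$\Rightarrow$(i) exactly, and the pointwise condition at $Rs^\ast$ can only be an \emph{equivalence} either for all admissible (mean-preserving) risks or at the level of the quadratic approximation; the paper silently adopts the latter convention, turning the approximation (2.5) with $Var_f(A)>0$ into an exact sign equivalence (its step (4.9)--(4.10)), exactly as you propose with $D^2(\tilde R)>0$. One further small point worth making explicit: condition (ii) involves $u'''$, so the hypothesis should be $u$ of class $\mathcal{C}^3$ (as in Proposition 4.1); the $\mathcal{C}^2$ in the statement is evidently a slip, and your computation of $\phi''$ implicitly uses the stronger smoothness.
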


We consider Kimball's absolute prudence index $P_u(x)=-\frac{u'''(x)}{u''(x)}$, $x \in {\bf{R}}$ (by \cite{kimball}, \cite{kimball1}).

The function $RP_u(x)=-x\frac{u'''(x)}{u''(x)}$, $x \in {\bf{R}}$ is called the relative prudence index. Then the left hand side of the inequality (ii) is exactly $RP_u(R s^\ast)$.

{\emph{The possibilistic  model}} We fix a weighting function $f: [0, 1] \rightarrow \bf{R}$. In case of this model, the uncertainty from period $0$ is no longer probabilistic, but possibilistic: interest rate for saving will be a fuzzy number $B$ such that $E_f(B)=r$. The return on saving will be the fuzzy number $A=1+B$, thus $E_f(A)=1+r=R$. We will make the hypothesis that $supp(A)=\{ x \in {\bf{R}} |A(x)>0 \}$ is a subset of the interval $(0, \infty)$. We denote by $[A]^\gamma=[a_1(\gamma), a_2(\gamma)]$, $\gamma \in [0, 1]$ the level sets of the fuzzy number $A$.

In rest, the entities of the possibilistic model coincide with the ones of the two models presented above. We introduce a new function $v: {\bf{R}}^2 \rightarrow {\bf{R}}$:

$v(s, x)=u(sx)$  (3.6)

The total utility function of the possibilistic model will be:

$W(s)=u(y_0-s_0)+E_f(v(s, A))$ (3.7)

$E_f(v(s, A))$ is the possibilistic expected utility associated with $f$, the fuzzy number $A$ and the utility function $v(s, .)$ in which $s$ is a parameter. By (2.1), the explicit form of $W(s)$ will be:

$W(s)=u(y_0-s)+\frac{1}{2}\int_0^1 [v(s, a_1(\gamma))+v(s, a_2(\gamma))]f(\gamma)d\gamma$

$=u(y_0-s)+\frac{1}{2}\int_0^1 [u(sa_1(\gamma))+u(sa_2(\gamma))]f(\gamma)d\gamma$

From (3.6) one obtains

$\frac{\partial v(s,x)}{\partial s}=x u'(sx)$ (3.8)

$\frac{\partial^2 v(s,x)}{\partial s^2}=x^2 u''(sx)<0$ (3.9)

Deriving the explicit form of $W(s)$ one obtains:

$W'(s)=-u'(y_0-s)+\frac{1}{2}\int_0^1 [a_1(\gamma)u'(sa_1(\gamma))+a_2(\gamma)u'(sa_2(\gamma))]f(\gamma)d\gamma$

By (3.8), $W'(s)$ is written as follows:

$W'(s)=-u'(y_0-s)+E_f[\frac{\partial v(s, A)}{\partial s}]$ (3.10)

An analogous calculation leads to

$W''(s)=u''(y_0-s)+E_f[\frac{\partial^2 v(s, A)}{\partial s^2}]$ (3.11)

In formula (3.10), $E_f[\frac{\partial v(s, A)}{\partial s}]$ is the possibilistic expected utility associated with $f$, $A$ and the utility function $\frac{\partial v(s, .)}{\partial s}$, and in (3.11) $E_f[\frac{\partial^2 v(s, A)}{\partial s^2}]$ is the possibilistic expected utility associated with
$f$, $A$ and the utility function $\frac{\partial^2 v(s, .)}{\partial s^2}$.

\begin{propozitie}
The total utility function $W$ is strictly concave.
\end{propozitie}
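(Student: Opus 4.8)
The plan is to compute $W''(s)$ explicitly and show it is strictly negative for every $s$, which is precisely strict concavity. I would start from formula (3.11), namely $W''(s)=u''(y_0-s)+E_f\!\left[\frac{\partial^2 v(s,A)}{\partial s^2}\right]$, and treat the two summands separately. The first summand is immediate: by hypothesis $u$ is of class $\mathcal{C}^2$ with $u''<0$, so $u''(y_0-s)<0$ for all $s$.

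For the second summand I would unfold the definition of possibilistic expected utility. By (2.1) applied to the utility function $x\mapsto \frac{\partial^2 v(s,x)}{\partial s^2}$, we have
\[
E_f\!\left[\frac{\partial^2 v(s,A)}{\partial s^2}\right]=\frac{1}{2}\int_0^1\Big[a_1(\gamma)^2u''(sa_1(\gamma))+a_2(\gamma)^2u''(sa_2(\gamma))\Big]f(\gamma)\,d\gamma,
\]
using the identity (3.9), i.e. $\frac{\partial^2 v(s,x)}{\partial s^2}=x^2u''(sx)$. Here is where the standing hypothesis $supp(A)\subseteq(0,\infty)$ enters: it guarantees $a_1(\gamma)>0$ and $a_2(\gamma)>0$ for all $\gamma\in[0,1]$, so each term $a_i(\gamma)^2u''(sa_i(\gamma))$ is a product of a positive number and the negative number $u''(sa_i(\gamma))$, hence strictly negative. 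Since the weighting function $f$ is non-negative, the integrand is $\le 0$, so the integral is $\le 0$, and therefore $E_f\!\left[\frac{\partial^2 v(s,A)}{\partial s^2}\right]\le 0$.

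Combining the two observations gives $W''(s)=u''(y_0-s)+E_f\!\left[\frac{\partial^2 v(s,A)}{\partial s^2}\right]<0$ for every $s$, the strict inequality coming already from the first term, so $W$ is strictly concave. I do not expect a genuine obstacle here; the only point requiring a little care is the justification that the possibilistic expected utility of a pointwise non-positive function is non-positive, which reduces to the sign of $f$ and the positivity of the level-set endpoints $a_1(\gamma),a_2(\gamma)$ — both of which are supplied by the model's assumptions. Alternatively, one could invoke the approximation/monotonicity properties of $E_f$ from Section 2, but working directly from the integral formula (2.1) is cleaner and self-contained.
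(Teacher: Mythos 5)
Your proof is correct and follows essentially the same route as the paper: differentiate twice via (3.11), use $u''<0$ for the first term and the sign of $x^2u''(sx)$ from (3.9) for the possibilistic term. You simply spell out the integral representation and the role of $supp(A)\subseteq(0,\infty)$ explicitly, which the paper leaves implicit.
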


\begin{proof}
By hypothesis, $u''<0$  and (3.9), (3.11) it follows $W''(s) <0$ for any $s$.

In case of the possibilistic model, the consumer wishes to obtain the solution $s^{\ast \ast}$ of the optimization problem:

$\displaystyle \max_s W(s)$  (3.12)

The optimal saving $s^{\ast \ast}$ will be the solution of the first-order condition $W'(s)=0$. By (3.10), the first-order condition $W'(s)=0$ is written:

$-u'(y_0-s)+E_f[\frac{\partial v(s, A)}{\partial s}]=0$ (3.13)
\end{proof}

\section{The possibilistic precautionary saving}

We recall from the previous section that $s^\ast$ is the optimal saving for the certain model, $s^\ast_1$ is the optimal saving for the probabilistic model, and $s^{\ast \ast}$ is the optimal saving for the possibilistic model (3.12). According to the first-order conditions (3.3) and (3.14) the following equalities hold:

$u'(y_0-s^\ast)=Ru'(s^\ast R)$ (4.1)

$u'(y_0-s^{\ast \ast})=E_f[\frac{\partial v(s^{\ast \ast}, A)}{\partial s}]$ (4.2)

The difference $s^{\ast \ast}-s^\ast$, called possibilistic precautionary saving, represents the variation of optimal saving when moving from the certain model (3.1) to the possibilistic model (3.13).

The following proposition is the possibilistic version of the Rothschild and Stiglitz (see Proposition 3.1).

\begin{propozitie}
Assume that the utility function $u$ is of class $\mathcal{C}^3$ and $u'>0, u''<0$. Then the following assertions are equivalent:

(a) $s^{\ast \ast}-s^\ast \geq 0$;

(b) $-\frac{u'''(R s^\ast)}{u''(R s^\ast)} Rs^\ast \geq 2$.
\end{propozitie}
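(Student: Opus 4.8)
The plan is to reduce the whole statement to the sign of $W'$ at the certain‑model optimum $s^{\ast}$, and then to read that sign off from the approximation formula (2.5). Since $W$ is strictly concave (Proposition 3.2), $W'$ is strictly decreasing and vanishes exactly at $s^{\ast\ast}$; hence $s^{\ast\ast}-s^{\ast}\geq 0$ if and only if $W'(s^{\ast})\geq 0$. So the proposition reduces to showing that $W'(s^{\ast})\geq 0$ is equivalent to (b).

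Next I would put $W'(s^{\ast})$ into a usable shape. By (3.10), $W'(s^{\ast})=-u'(y_0-s^{\ast})+E_f\!\left[\frac{\partial v(s^{\ast},A)}{\partial s}\right]$, and the certain‑model first‑order condition (4.1) gives $u'(y_0-s^{\ast})=R\,u'(s^{\ast}R)$. Introduce the one‑variable utility function $g(x)=x\,u'(s^{\ast}x)=\frac{\partial v(s^{\ast},x)}{\partial s}$, with $s^{\ast}$ a fixed parameter. Since the certain model was built so that $R=E_f(A)$, we have $R\,u'(s^{\ast}R)=g(E_f(A))$, hence
\[
W'(s^{\ast})=E_f[g(A)]-g(E_f(A)).
\]

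Now I would apply the approximation (2.5) to the utility function $g$ and the fuzzy number $A$, getting $E_f[g(A)]\approx g(E_f(A))+\tfrac12 g''(E_f(A))\,Var_f(A)$, so that $W'(s^{\ast})\approx \tfrac12 g''(R)\,Var_f(A)$. Because $Var_f(A)\geq 0$, this says $W'(s^{\ast})\geq 0$ iff $g''(R)\geq 0$. A direct differentiation gives $g''(x)=2s^{\ast}u''(s^{\ast}x)+(s^{\ast})^2 x\,u'''(s^{\ast}x)$, hence $g''(R)=s^{\ast}\bigl(2u''(s^{\ast}R)+s^{\ast}R\,u'''(s^{\ast}R)\bigr)$; dividing by $s^{\ast}>0$ and then by $u''(s^{\ast}R)<0$ (which reverses the inequality) turns $g''(R)\geq 0$ into $-\frac{u'''(Rs^{\ast})}{u''(Rs^{\ast})}Rs^{\ast}\geq 2$, i.e. exactly (b). Chaining the equivalences proves the proposition; note this is also where the hypothesis $u\in\mathcal{C}^3$ enters, ensuring $g\in\mathcal{C}^2$ and giving meaning to (b).

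The delicate point is that (2.5) is only an approximation, so strictly the equivalence is obtained up to second order, i.e. under the tacit assumption that the spread of $A$ about $R$ is small enough for the higher‑order terms in the level‑set expansion of $g$ to be negligible. An exact version would require either making this smallness explicit or replacing (2.5) by a convexity/Jensen inequality for $E_f$ (if $g$ is convex on $\mathrm{supp}(A)$ then $\tfrac12[g(a_1(\gamma))+g(a_2(\gamma))]\geq g(\tfrac12(a_1(\gamma)+a_2(\gamma)))$, and Jensen for the density $f$ yields $E_f[g(A)]\geq g(E_f(A))$); but matching such a global inequality to the purely local condition (b) at the single point $Rs^{\ast}$ is precisely what forces one back onto (2.5), so I would carry out the proof through (2.5) as the paper's framework intends.
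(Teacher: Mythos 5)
Your proof is correct and follows essentially the same route as the paper: reduce (a) to the sign of $W'(s^\ast)$ via strict concavity of $W$, introduce the auxiliary function $x\mapsto x\,u'(s^\ast x)$ (the paper's $h$), apply the approximation (2.5) together with the certain-model first-order condition to get $W'(s^\ast)\approx \tfrac{1}{2}Var_f(A)\,s^\ast\bigl(2u''(s^\ast R)+s^\ast R\,u'''(s^\ast R)\bigr)$, and translate the sign condition into (b) using $s^\ast>0$ and $u''<0$. The only tiny point is that the equivalence requires $Var_f(A)>0$ (as the paper assumes), not merely $Var_f(A)\geq 0$; your closing remark about (2.5) being an approximation accurately reflects the (same) limitation of the paper's own argument.
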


\begin{proof}
By Proposition 3.2, the derivative $W'$ of the total utility $W$ is a strictly decreasing function. Then, taking into account that $W'(s^{\ast \ast})=0$, the following equivalence holds:

$s^{\ast \ast} \geq s^\ast $ iff $0=W'(s^{\ast \ast}) \leq W'(s^\ast)$  (4.3)

From (3.10) one obtains

$W'(s^\ast)=-u'(y_0-s^\ast)+E_f[\frac{\partial v(s^\ast, A)}{\partial s}]$ (4.4)

We consider the following function:

$h(x)=\frac{\partial v(s^\ast, x)}{\partial s}=xu'(s^\ast x)$ (4.5)

Then

$h'(x)=u'(s^\ast x)+s^\ast x u''(s^\ast x)$  (4.6)

$h''(x)=s^\ast [2u''(s^\ast x)+x s^\ast u'''(s^\ast x)]$ (4.7)

We apply Proposition 2.3 to the function $h$ and the fuzzy number $A$:

$E_f(h(A)) \approx h(E_f(A))+\frac{1}{2}h''(E_f(A))Var_f(A)$.

Since $E_f(A)=R$ and $E_f(h(A))=E_f[\frac{\partial v(s^\ast, A)}{\partial s}]$, the previous approximation formula becomes:

$E_f[\frac{\partial v(s^\ast, A)}{\partial s}] \approx h(R)+\frac{1}{2}h''(R)Var_f(A)$

Taking into account (4.5) and (4.7)

$E_f[\frac{\partial v(s^\ast, A)}{\partial s}] \approx R u'(s^\ast R)+\frac{Var_f(A)}{2}s^\ast [2u''(s^\ast R)+R s^\ast u'''(s^\ast R)]$  (4.8)

Replacing in (4.4) it follows

$W'(s^\ast) \approx -u'(y_0-s^\ast)+Ru'(s^\ast R)+\frac{Var_f(A)}{2}s^\ast [2u''(s^\ast R)+R s^\ast u'''(s^\ast R)]$

Taking into account (4.1), the previous formula becomes:

$W'(s^\ast) \approx \frac{Var_f(A)}{2}s^\ast [2u''(s^\ast R)+R s^\ast u'''(s^\ast R)]$  (4.9)

But $Var_f(A)>0$ and $s^\ast>0$ by hypothesis, thus

$0 \leq W'(s^\ast)$ iff $2u''(s^\ast R)+s^\ast R u'''(s^\ast R)\geq 0$ (4.10)

Taking into account that $u''(s^\ast R)<0$, from the equivalences (4.3) and (4.10) it follows immediately

$s^{\ast \ast} \geq s^\ast$ iff $-\frac{u'''(R s^\ast)}{u''(R s^\ast)}Rs^\ast \geq 2$.

\end{proof}

\begin{remarca}
The assertions (a), (b) from Proposition 4.1 remain equivalent when the inequalities are replaced by strict inequalities: $s^{\ast \ast} - s^\ast>0$ iff $-\frac{u'''(R s^\ast)}{u''(R s^\ast)}Rs^\ast > 2$.
\end{remarca}

\begin{remarca}
The condition $s^{\ast \ast} - s^\ast>0$ means that the presence of possibilistic interest-rate risk in an optimal saving model draws by itself the increase in the optimal saving level (we will say that the possibilistic interest-rate generates an extra-saving).
\end{remarca}

One notices that condition (ii) from Proposition 3.1 coincides with condition (b) from Proposition 4.1. Combining the two propositions, the following result is obtained:

\begin{corolar}
Assume that $u$ is of class $\mathcal{C}^3$ and $u'>0, u''<0$. Then the following assertions are equivalent:

(a) $s^{\ast \ast}-s^\ast \geq 0$ (resp. $s^{\ast \ast}-s^\ast > 0$);

(b) $s^\ast_1-s^\ast \geq 0$ (resp. $s^\ast_1-s^\ast > 0$);

(c) $-\frac{u'''(R s^\ast)}{u''(R s^\ast)} Rs^\ast \geq 2$ (resp. $-\frac{u'''(R s^\ast)}{u''(R s^\ast)} Rs^\ast >2$).

\end{corolar}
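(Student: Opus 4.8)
The plan is to derive this corollary with no essentially new work, simply by chaining together the two equivalences already in hand. The key observation --- anticipated in the sentence preceding the statement --- is that condition (ii) of Proposition 3.1, condition (b) of Proposition 4.1, and assertion (c) of the corollary are one and the same inequality, $-\frac{u'''(Rs^\ast)}{u''(Rs^\ast)}Rs^\ast\geq 2$, i.e. $RP_u(Rs^\ast)\geq 2$. Hence the corollary is nothing more than transitivity through the common node (c).

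Concretely, I would argue in three short steps. First, the standing hypotheses $u\in\mathcal{C}^3$, $u'>0$, $u''<0$ imply in particular $u\in\mathcal{C}^2$, so Proposition 3.1 applies and yields (b)$\iff$(c). Second, the hypotheses of the corollary are exactly those of Proposition 4.1, which therefore gives (a)$\iff$(c). Third, combining the two equivalences at the shared statement (c) produces (a)$\iff$(c)$\iff$(b), which is the asserted equivalence for the weak inequalities.

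For the parenthetical strict versions I would run the same chain, using Remark 4.2 (the strict form of Proposition 4.1) in place of Proposition 4.1 in the second step; it states precisely that (a) and (c) remain equivalent when $\geq$ is replaced by $>$. The only point deserving care is the first step in the strict case, since Proposition 3.1 is quoted only with $\geq$: one needs the strict Rothschild--Stiglitz equivalence $s^\ast_1-s^\ast>0\iff RP_u(Rs^\ast)>2$. This follows from the same reasoning that proves Proposition 3.1 --- strict monotonicity of $V'$ turns a strict sign of $V'(s^\ast)$, which by the analogue of (4.9) agrees with the sign of $2u''(s^\ast R)+Rs^\ast u'''(s^\ast R)$, into a strict prudence inequality --- and is contained in the results of \cite{rothschild} and \cite{magnani2}. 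I therefore expect no real obstacle here: all the analytic work lives in Propositions 3.1 and 4.1 and Remark 4.2, and the corollary itself is pure bookkeeping.
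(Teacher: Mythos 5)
Your proposal is correct and follows exactly the paper's route: the corollary is obtained by chaining Proposition 3.1 and Proposition 4.1 through the common condition (c), with Remark 4.2 supplying the strict possibilistic case. Your extra remark on the strict form of the Rothschild--Stiglitz equivalence (which the paper leaves implicit) is a legitimate and careful addition, not a deviation.
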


By the previous corollary, the probabilistic interest-rate risk generates an extra-saving if and only if the possibilistic interest-rate risk generates an extra-saving.

One asks how to compare the optimal saving $s^\ast_1$ (corresponding to the probabilistic model) with the optimal saving $s^{\ast \ast}$ (corresponding to the possibilistic model).

\begin{propozitie}
Assume that $u$ is of class $\mathcal{C}^3$ and $u'>0, u''<0$. Then the following assertions are equivalent:

(i) $s^{\ast \ast} \geq s^\ast_1$;

(ii) $[2u''(s^\ast_1 R)+Rs^\ast_1u'''(s^\ast_1 R)][Var_f(A)-D^2(\tilde R)] \geq 0$.
\end{propozitie}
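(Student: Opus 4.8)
The plan is to follow the same strategy as in the proof of Proposition 4.1, but to compare $W'$ against the first-order condition of the \emph{probabilistic} model (3.5) instead of the certain one (3.3). Since by Proposition 3.2 the derivative $W'$ is strictly decreasing and $W'(s^{\ast\ast})=0$, we have $s^{\ast\ast}\geq s^\ast_1$ if and only if $0=W'(s^{\ast\ast})\leq W'(s^\ast_1)$. Thus the whole argument reduces to determining the sign of $W'(s^\ast_1)$.

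First I would write, using (3.10), $W'(s^\ast_1)=-u'(y_0-s^\ast_1)+E_f[\frac{\partial v(s^\ast_1,A)}{\partial s}]$, and eliminate the term $u'(y_0-s^\ast_1)$ by means of the first-order condition (3.5) of the probabilistic model, which gives $u'(y_0-s^\ast_1)=M(\tilde R\,u'(s^\ast_1\tilde R))$. Hence $W'(s^\ast_1)=E_f[\frac{\partial v(s^\ast_1,A)}{\partial s}]-M(\tilde R\,u'(s^\ast_1\tilde R))$. The natural common object is the function $h(x)=\frac{\partial v(s^\ast_1,x)}{\partial s}=x\,u'(s^\ast_1 x)$, for which, exactly as in (4.5)--(4.7) with $s^\ast$ replaced by $s^\ast_1$, one has $h''(x)=s^\ast_1[\,2u''(s^\ast_1 x)+s^\ast_1 x\,u'''(s^\ast_1 x)\,]$.

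Next I would apply the approximation formula of Proposition 2.3 to $h$ and $A$ (recall $E_f(A)=R$), obtaining $E_f(h(A))\approx h(R)+\frac12 h''(R)Var_f(A)$, and, on the probabilistic side, the corresponding second-order Taylor approximation for the mean of a function of a random variable, $M(h(\tilde R))\approx h(R)+\frac12 h''(R)D^2(\tilde R)$, valid since $M(\tilde R)=R$. Subtracting the two expansions, the common terms $h(R)$ cancel and $W'(s^\ast_1)\approx\frac12 h''(R)[\,Var_f(A)-D^2(\tilde R)\,]=\frac{s^\ast_1}{2}[\,2u''(s^\ast_1 R)+Rs^\ast_1 u'''(s^\ast_1 R)\,][\,Var_f(A)-D^2(\tilde R)\,]$. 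Since $s^\ast_1>0$, the sign of $W'(s^\ast_1)$ coincides with the sign of the product in (ii), so $0\leq W'(s^\ast_1)$ iff $[\,2u''(s^\ast_1 R)+Rs^\ast_1 u'''(s^\ast_1 R)\,][\,Var_f(A)-D^2(\tilde R)\,]\geq 0$, which combined with the equivalence of the first paragraph yields the claim.

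The main obstacle, as in Proposition 4.1, is that both expansions are merely approximations, so the argument lives at the level of the second-order approximations of the two expected utilities; the key is to use the \emph{same} function $h$ and the \emph{same} centering point $R=E_f(A)=M(\tilde R)$ in both expansions, so that the zeroth-order term $h(R)$ cancels exactly and only the variance terms remain. A secondary point worth spelling out is the probabilistic counterpart of Proposition 2.3, namely $M(g(\tilde R))\approx g(M(\tilde R))+\frac12 g''(M(\tilde R))D^2(\tilde R)$, since this is what makes the probabilistic variance $D^2(\tilde R)$ enter the comparison alongside $Var_f(A)$.
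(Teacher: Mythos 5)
Your proposal is correct and follows essentially the same route as the paper: reduce to the sign of $W'(s^\ast_1)$ via the strict monotonicity of $W'$, eliminate $u'(y_0-s^\ast_1)$ with the probabilistic first-order condition, and expand both $E_f[\frac{\partial v(s^\ast_1,A)}{\partial s}]$ and $M[\tilde R u'(s^\ast_1\tilde R)]$ around the common center $R$ so that only the variance terms survive (the paper calls your $h$ the function $g$ and reuses formula (4.8) instead of re-invoking Proposition 2.3, but this is the same computation). Your explicit remark that the argument holds only at the level of the second-order approximations matches the paper's own use of $\approx$ in (4.15)--(4.21).
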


\begin{proof}
From (3.10) and (3.5) it follows that the optimal saving $s^\ast_1$ satisfies the following equalities:

$W'(s^\ast_1)=-u'(y_0-s^\ast_1)+E_f[\frac{\partial v(s^\ast_1, A)}{\partial s}]$ (4.11)

$u'(y_0-s^\ast_1)=M[\tilde R u'(s^\ast_1 \tilde R)]$  (4.12)

By Proposition 3.2, $W'$ will be a strictly decreasing function. Then

$s^{\ast \ast} \geq s^\ast_1$ iff $0 \leq W'(s^{\ast \ast}) \leq W'(s^\ast_1)$ (4.13)

Taking into account (4.12), equality (4.11) becomes:

$W'(s^\ast_1)=-M[\tilde R u'(s^\ast_1 \tilde R)]+E_f[\frac{\partial v(s^\ast_1, A)}{\partial s}]$ (4.14)

Taking into account (4.8) we will have the approximation formula:

$E_f[\frac{\partial v(s^\ast_1, A)}{\partial s}] \approx Ru'(s^\ast_1 R)+\frac{Var_f(A)}{2}s^\ast_1 [2u''(s^\ast_1 R)+R s^\ast_1 u'''(s^\ast_1 R)]$ (4.15)

For the computation of the other term of the sum from (4.14) we consider the following function:

$g(x)=x u'(s^\ast_1 x)$ (4.16)

A simple computation shows that

$g''(x)=s^\ast_1 [2 u''(s^\ast_1 x)+x s^\ast_1 u'''(s^\ast_1 x)]$ (4.17)

Applying to $g$ the second-order Taylor approximation, the known approximation formula is obtained:

$M[\tilde R u'(s^\ast_1 \tilde R)]=M[g(\tilde R)] \approx g(M(\tilde R))+\frac{1}{2}g''(M(\tilde R)) D^2(\tilde R)$ (4.18)

Since $M(\tilde R)=R$ from (4.18) one obtains:

$M[\tilde R u'(s^\ast_1 \tilde R)] \approx g(R)+\frac{g''(R)}{2}D^2(\tilde R)$ (4.19)

From (4.16), (4.17) and (4.19) we find the approximation

$M[\tilde R u'(s^\ast_1 \tilde R)] \approx R u'(s^\ast_1 R)+\frac{D^2(\tilde R)}{2}s^\ast_1 [2u''(s^\ast_1 R)+R s^\ast_1 u'''(s^\ast_1 R)]$ (4.20)

Using (4.14), (4.15) and (4.20) it follows

$W'(s^\ast_1) \approx \frac{s^\ast_1}{2}[2 u''(s^\ast_1 R)+Rs^\ast_1 u'''(s^\ast_1 R)] [Var_f(A)-D^2(\tilde R)]$ (4.21)

Since $s^\ast_1>0$, from (4.13) and (4.20) the equivalence of conditions (i) and (ii) follows immediately.

\end{proof}

\begin{remarca}
The assertions (i), (ii) from Proposition 4.5 remain equivalent if we replace the inequality by strict inequality. Condition $s^{\ast \ast}>s^\ast_1$ expresses the fact that moving from the probabilistic optimal saving model to the possibilistic optimal saving model, an extra-saving is obtained.
\end{remarca}

\begin{corolar}
Under the conditions of Proposition 4.5, the following are equivalent:

(i) $s^{\ast \ast} > s^\ast_1$;

(ii) The disjunction of the following assertions holds:

$(ii_1)$ $-\frac{u'''(Rs^\ast_1)}{u''(Rs^\ast_1)} R s^\ast_1 >2$ and $Var_f(A)>D^2(\tilde R)$

$(ii_2)$ $-\frac{u'''(Rs^\ast_1)}{u''(Rs^\ast_1)} R s^\ast_1 <2$ and $Var_f(A)<D^2(\tilde R)$
\end{corolar}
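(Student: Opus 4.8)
The plan is to reduce everything to Proposition 4.5 together with its strict-inequality refinement in Remark 4.6, which already states that $s^{\ast \ast} > s^\ast_1$ holds if and only if the product $[2u''(s^\ast_1 R)+Rs^\ast_1 u'''(s^\ast_1 R)]\,[Var_f(A)-D^2(\tilde R)]$ is strictly positive. From this point the only remaining work is elementary sign bookkeeping: a product of two real numbers is strictly positive exactly when both factors are nonzero and have the same sign, which splits into the two cases ``both factors positive'' and ``both factors negative''.

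Next I would translate the first factor into the relative prudence index. Writing $t=Rs^\ast_1$ for brevity and recalling the standing hypothesis $u''(t)<0$, dividing the inequality $2u''(t)+t\,u'''(t)>0$ by $u''(t)$ reverses its direction and gives $2+t\,\frac{u'''(t)}{u''(t)}<0$, that is, $-t\,\frac{u'''(t)}{u''(t)}>2$, which is precisely the prudence condition in $(ii_1)$. The same division applied to $2u''(t)+t\,u'''(t)<0$ yields $-t\,\frac{u'''(t)}{u''(t)}<2$, the prudence condition in $(ii_2)$. Pairing each case with the matching sign of $Var_f(A)-D^2(\tilde R)$ produces exactly the disjunction $(ii_1)$ or $(ii_2)$, and conversely each branch of the disjunction forces the product in Proposition 4.5 to be strictly positive.

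There is essentially no hard step; the only point worth a remark is the boundary behaviour. When $2u''(t)+t\,u'''(t)=0$ (equivalently, when the relative prudence index at $t$ equals $2$) the product in Proposition 4.5 vanishes regardless of the sign of $Var_f(A)-D^2(\tilde R)$, so the strict inequality $s^{\ast \ast}>s^\ast_1$ fails; likewise the product vanishes when $Var_f(A)=D^2(\tilde R)$. Both excluded configurations are correctly left out of the disjunction, so no gap appears and the stated equivalence follows. The main obstacle, such as it is, is simply to read the conclusion within the same second-order approximation regime under which Propositions 4.5 and 4.1 are formulated, so that the equivalence is understood up to the Taylor approximations used there rather than as an exact identity.
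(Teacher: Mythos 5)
Your proposal is correct and follows essentially the same route as the paper: it invokes Proposition 4.5 together with its strict-inequality version (Remark 4.6), splits the strict positivity of the product $[2u''(s^\ast_1 R)+Rs^\ast_1 u'''(s^\ast_1 R)][Var_f(A)-D^2(\tilde R)]$ into the two same-sign cases, and uses $u''<0$ to rewrite the first factor's sign as the relative prudence condition, which is exactly the paper's argument. The added remark on the boundary cases where a factor vanishes is a harmless elaboration, not a departure.
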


\begin{proof}
From Proposition 4.5 and Remark 4.6, it follows that $s^{\ast \ast} > s^\ast_1$ if and only if the disjunction of the following assertions is true:

(a) $2u''(s^\ast_1R)+R s^\ast_1 u'''(s^\ast_1 R)>0$ and $Var_f(A)-D^2(\tilde R)>0$.

(b) $2u''(s^\ast_1R)+R s^\ast_1 u'''(s^\ast_1 R)<0$ and $Var_f(A)-D^2(\tilde R)<0$.

Taking into account that $u''(s^\ast_1 R)<0$ it follows $2u''(s^\ast_1 R)+R s^\ast_1 u'''(s^\ast_1 R)>0$ iff $-\frac{u'''(Rs^\ast_1)}{u''(Rs^\ast_1)} R s^\ast_1 >2$ thus the assertions (a) and $(ii_1)$ are equivalent. Similarly, the assertions (b) and $(ii_2)$ are equivalent.

\end{proof}

\begin{remarca}
By Corollary 4.7, the fact that moving from the optimal saving model with probabilistic risk to the optimal saving model with possibilistic risk generated an extra-saving is characterized by inequalities expressed in terms of (probabilistic and possibilistic) relative prudence index $-\frac{u'''(Rs^\ast_1)}{u''(Rs^\ast_1)} R s^\ast_1$ and the variances $E_f(A)$ and $D^2(\tilde R)$.
\end{remarca}

\begin{exemplu}
\normalfont
Assume that the consumer's utility function is of type $CRRA$: $u(x)=\frac{x^{1-\gamma}}{1-\gamma}$ for any $x>0$ ($\gamma$ is a parameter with the property $\gamma >0$ and $\gamma \not =1$). Then the relative prudence index is $R P_u(x)=\gamma+1$, for any $x$ (see \cite{eeckhoudt}, p.99).

Applying Remark 4.2 to this case, the possibilistic interest-rate risk $A$ generates an extra-saving if and only if $\gamma >1$.

We fix the weighting function $f(t)=2t$, for $t \in [0, 1]$.

Assume that $0< c <d$. We consider the fuzzy number $A$ with $a_1(\gamma)=c$, $a_2(\gamma)=d$ for any $t \in [0, 1]$ and $\tilde R$ the uniformly distributed random variable on the interval $[c, d]$. A simple computation shows that

$R=E_f(A)=M(\tilde R)=\frac{c+d}{2}$;

$Var_f(A)=\frac{(c-d)^2}{4}$; $D^2(\tilde R)=\frac{(c-d)^2}{12}$.

Noticing that $Var_f(A)<D^2(\tilde R)$, by applying Corollary 4.7 it follows that $s^{\ast \ast} >s^\ast_1$ if and only if $\gamma <1$.

\end{exemplu}

\section{Concluding Remarks}
Using the notion of possibilistic expected utility from \cite{georgescu3}, in this paper an optimal saving model is built in which an interest rate risk modeled by a fuzzy number is present.

Two main results are proved:

$\bullet$ a necessary and sufficient condition for the presence of a possibilistic interest rate risk to generate an extra-saving;

$\bullet$ the comparison of optimal saving levels corresponding to two models with different descriptions of the probabilistic interest-rate risk model from \cite{magnani2} and the possibilistic model from this paper.

The first result can be seen as a possibilistic version of a Rothschild and Stiglitz theorem from \cite{magnani2}.

We mention a few open problems:

(1) The study of possibilistic models of optimal saving in which to appear both labor income risk and interest rate risk, one of them in probabilistic form (random variable), the other in possibilistic form (fuzzy number);

(2) Continuing the research from the paper \cite{georgescu5} by studying some possibilistic and mixed models of optimal saving inspired from the models from papers \cite{gunning}, \cite{vergara}. These models will be obtained as combinations of four types of possibilistic risk:  labor income risk, wealth risk, asset risk and capital risk;

(3) Building some optimal saving models in a larger possibilistic frame, such as the one provided by expected utility operators (see \cite{georgescu1}, Chapter $5$).

\end{document}